\pgfplotsset{width=7cm,compat=1.8}
\newtheorem{theorem}{Theorem}
\newtheorem{definition}{Definition}
\newtheorem{remark}{Remark}
\title{Rationality and correctness in $n$-player games\footnote{The authors conducted this study without external funding and have no financial or non-financial conflicts of interest to disclose.}}
\author{Lorenzo Bastianello\footnote{Ca' Foscari University of Venice, Venice, Italy. e-mail: lorenzo.bastianello@unive.it} \and
Mehmet S. Ismail\footnote{Department of Political Economy, King's College London, London, UK. e-mail: mehmet.ismail@kcl.ac.uk.}}
\date{Revised: 11 December 2023 \\ First version: 20 September 2022}
\begin{document}
\maketitle

\begin{abstract}
There are two well-known sufficient conditions for Nash equilibrium in two-player games: mutual knowledge of rationality (MKR) and mutual knowledge of conjectures. MKR assumes that the \textit{concept} of rationality is mutually known. In contrast, mutual knowledge of conjectures assumes that a given profile of conjectures is mutually known, which has long been recognized as a strong assumption. In this note, we introduce a notion of ``mutual assumption of rationality and correctness'' (MARC), which conceptually aligns more closely with the MKR assumption. We present two main results. Our first result establishes that MARC holds in every two-person zero-sum game. In our second theorem, we show that MARC does not in general hold in $n$-player games. 
\end{abstract}

\noindent \textit{Keywords}: Mutual knowledge of rationality, Mutual knowledge of conjectures, Correct beliefs, Nash equilibrium, Mutual assumption of rationality and correctness

\newpage
\section{Introduction and related literature}

\begin{quote} ``Suppose that the game being
played (i.e., both payoff functions), the rationality of the players, and their conjectures are all mutually known. Then the conjectures constitute a Nash equilibrium.'' \citep[p.\ 1161]{aumann1995}
\end{quote}

Understanding the epistemic conditions that give rise to  a Nash equilibrium  has been a topic of interest in both economics and computer science, among other fields. Two well-known sufficient conditions for Nash equilibrium in two-player games are mutual knowledge of rationality (MKR) and mutual knowledge of conjectures. In their seminal paper, \citet{aumann1995} prove that MKR and mutual knowledge of conjectures are sufficient for conjectures to form a Nash equilibrium (see the above quote).\footnote{\citet{aumann1987} does not assume common knowledge of conjectures (i.e., beliefs) but assumes common prior from which these conjectures are derived, and he shows that these assumptions imply correlated equilibrium behavior; for further discussion, see, e.g., \citet{battigalli1999}.} Other notable works in this area include those of \citet{brandenburger1987}, \citet{tan1988}, and a comprehensive review of this literature can be found in works by \citet{perea2012} and \citet{dekel2015}. 

The assumptions of MKR and mutual knowledge of conjectures serve different roles and should be distinguished. MKR assumes that the \textit{concept} of rationality is mutually known, without assigning some rational strategies to players. Accordingly, strategies consistent with MKR are determined by e.g. eliminating strictly dominated strategies.  In contrast, mutual knowledge of conjectures requires that a given profile of conjectures is mutually known, which has long been recognized as a strong assumption. Several prominent works by authors, including \citet{aumann1995}, \citet{bach2014}, and \citeauthor{perea2024} (forthcoming, Chapter 4), acknowledged the issues related to the assumption of mutual knowledge of conjectures:
\begin{quote} ``Both of our theorems assume some form of knowledge (either mutual or common) of the players' conjectures. Although the assumption of knowledge regarding others' actions is undoubtedly strong, there are circumstances in which it could hold.'' (ibid., p.\ 1176)
\end{quote}
\begin{quote} ``The conceptual issue imposed by assuming belief in opponents' conjectures persists.'' \citep[p.\ 57]{bach2014}
\end{quote}

In this paper, we introduce a notion of ``knowledge of correctness'' which conceptually aligns more closely with the MKR assumption. To proceed, we first provide informal definitions for the concepts of rationality and correctness. A player is rational if she chooses a strategy that maximizes her expected payoff given her conjecture about other players' strategies.  A player's conjecture  is correct if her conjecture coincides with the actual (ex-post) choice of other players. We say that mutual assumption of rationality and correctness (MARC) is satisfied if (i) each player $i$ is both rational and correct, and (ii) each player $i$ is rational, assuming that each player $j\neq i$ is both rational and correct. 

For interpreting our framework, we suggest keeping the following experimental setup in mind. Using a computer, each player $i$ inputs a mixed strategy, which leads to an expected payoff derived from the resulting mixed strategy profile. Players also input their conjectures, though these are not payoff-relevant. After the game, we check whether the conjectures of the players were correct. 

We present two main results in this paper. First, we prove that MARC holds in two-person zero-sum games, as stated in Theorem~\ref{thm:zerosum}. This consistency result can be attributed to the properties inherent in maximin strategies. Specifically, in zero-sum settings, each player's maximin strategy takes into account the possibility that her opponent will correctly anticipate her choices and respond optimally. Furthermore, a profile of maximin strategies form a Nash equilibrium, and hence MARC holds in this class of games. Second, Theorem~\ref{thm:main} establishes that MARC does not in general hold in $n$-player games. To elaborate, for every $n\geq 2$, we construct an $n$-player game such that assuming MARC leads to a contradiction. 

In Remark~\ref{rem:Nash}, we revisit  Aumann and Brandenburger's (\citeyear{aumann1995}) preliminary observation on sufficient conditions for Nash equilibrium in our context. These conditions continue the tradition established by earlier works, including by \citet{brandenburger1987}, \citet{tan1988},   \citet{perea2007}, \citet{barelli2009}, \citet{perea2012}, \citet{bach2014}, and \citet{bach2020}. Each of these works assumes a form of correctness of conjectures (for instance, mutual knowledge of conjectures in  Aumann and Brandenburger \citeyear{aumann1995}, and simple belief hierarchy in Perea \citeyear{perea2012}) to obtain Nash equilibrium. The first significant difference between our framework and these frameworks is that our model studies whether the actual choices of players constitute a Nash equilibrium. In contrast, the earlier models focus on whether the beliefs, rather than the actual choices, form a Nash equilibrium. In addition, our model allows for experimental falsification to verify the correctness of a conjecture.

\citet{perea2007} provide a set of sufficient conditions for choosing a Nash strategy, a strategy that is best response for a player given the Nash equilibrium strategies of other players, from the perspective of the beliefs of one player. \citet{polak1999} shows that Aumann and Brandenburger's (\citeyear{aumann1995}) conditions imply common knowledge of rationality in complete information games.  \citet{barelli2009} weakens earlier the epistemic conditions provided by \citet{aumann1995} without requiring common knowledge of rationality. \citet{bach2014} further weaken the earlier conditions by assuming these conditions on some pairs of players rather than assuming them for all pairs of players to obtain Nash equilibrium. Like \citet{barelli2009}, \citet{bach2014} emphasize that a common knowledge of rationality is not necessary for Nash equilibrium.

The implications of knowledge structures in games have received extensive attention over several decades, involving contributions from computer scientists, logicians, and economists. Beyond the works already cited, other notable research in this area includes studies by 
\citet{harsanyi1965,bicchieri1988,gilboa1990,morris1995,gul1998,aumann1998,halpern2002}. Our work is also related to the literature on conditional commitments \citet{howard1971,tennenholtz2004,kalai2010,vanderhoek2013,oesterheld2019,bastianello2022}, Stackelberg games and `magical thinking' \citep{quattrone1984}. Our framework is distinct in part because it does not incorporate assumptions such as magical thinking or conditional commitments. Instead, our aim is to characterize the class of games by assuming both the rationality and correctness of players. As previously mentioned, players' conjectures can be either correct or incorrect, and this can be tested in the lab. Specifically, in $n$-person games, we show that MARC does not hold, implying that, under the assumption of rationality, some players must be incorrect. However, in zero-sum games, we find that MARC is indeed satisfied.

\section{The setup}
\label{sec:setup}

We define $N = \{1, \ldots, n\}$ as the finite set of players, and $A_i$ as the finite set of pure actions available to player $i$. The set $\Delta A_i$ represents all probability distributions over $A_i$. We further define $\Delta A = \bigtimes_{i \in N} \Delta A_i$ and $\Delta A_{-i} = \bigtimes_{j \in N \setminus \{i\}} \Delta A_j$. To denote a \textbf{mixed strategy} of player $i$, we use the notation  $t_i \in \Delta A_i$. The notation $t\in \Delta A$ represents a mixed strategy profile, and $t_{-i} \in \Delta A_{-i}$ represents a mixed strategy profile for all players except $i$. 

We let $c^i_j \in \Delta A_j$ denote a \textbf{conjecture} (i.e., belief) of player $i \neq j$ regarding player $j$'s mixed strategy choice, and $c^i_{-i} \in \Delta A_{-i}$ denote player $i$'s conjecture about the other players' mixed strategy choices. 

Suppose that player $i$ chooses the mixed strategy $s_i\in \Delta A_{-i}$. Then, we call $s_i$ the \textbf{actual} strategy (or choice) of player $i$. Likewise, $s\in \Delta A$ denotes the actual mixed strategy profile, and $s_{-i} \in \Delta A_{-i}$ denotes the actual mixed strategy choices of all players other than $i$. 

Lastly, we define $u_i:\Delta A\rightarrow \mathbb{R}$ as the von Neumann-Morgenstern expected payoff function for player $i\in N$. We let $G=(\Delta A_i, u_i)_{i\in N}$ denote an $n$-player non-cooperative game in normal form. 

The interpretation of our results is most straightforward with the following experimental setup in mind, which can be used to test and falsify our assumptions. Imagine that each subject (player) $i$ inputs a mixed strategy $s_i$ using a computer. Each player $i$'s expected payoffs are then given by $u_i(s)$. In addition, each subject is also asked to type in her conjecture $c^i_{-i} \in \Delta A_{-i}$ about the mixed strategy choices of the other players, though the entered conjectures are not payoff-relevant. After the game, we simply check whether the conjectures of the players were correct, which are defined as follows.

\begin{definition}
\label{def:correctness}
Let $s_{-i}$ be the actual strategy profile of all players but $i$. Player $i$'s conjecture $c^i_{-i}$ is said to be \textbf{correct} if $c^i_{-i}=s_{-i}$. Player $i$ is called \textit{correct} if her conjecture is correct.
\end{definition}

In words, player $i$'s conjecture about other players' strategy is correct if the conjecture entered by $i$ matches the other players' actual mixed strategy profile.

We proceed to define the rationality notion used throughout the text.

\begin{definition}
\label{def:rationality}
Player $i$ is said to be \textit{rational} assuming that the other players choose $c^i_{-i}\in \Delta A_{-i}$, if $i$ chooses a strategy $s_i$ that solves the following maximization problem:
\begin{equation}
\label{eq:maximization_problem}
\begin{aligned}
\max_{t_{i} \in \Delta A^{i}} \quad & u_i(t_i,t_{-i} ) \\
\text{s.t.} \quad & t_{-i} = c^i_{-i}.
\end{aligned}
\end{equation}
In this context, $s_i$ is referred to as a \textbf{best response} for player $i$ to $c^i_{-i}$. Player $i$ is said to be \textbf{rational} if she is rational assuming some conjecture $c^i_{-i}\in \Delta A_{-i}$.
\end{definition}

This is a definition of Bayesian rationality \`a la Savage, where a player's actual strategy maximizes her expected utility given her conjecture about the other players' behavior. We then offer a helpful remark, serving as an analog to Aumann and Brandenburger's (1995) preliminary observation within this framework.

\begin{remark}[Aumann and Brandenburger, 1995]
    \label{rem:Nash}
    Let $s$ be the actual strategy profile played in an $n$-person game. For each player $i$, assume that $i$ is correct, $c^i_{-i}=s_{-i}$, and that  $i$ is rational assuming $c^i_{-i}$. Then, $s$ is a Nash equilibrium.
\end{remark}

\begin{proof}
For each player $i$, let $s_{-i}$ be the actual strategy profile of all players but $i$. Then, by definition, if player $i$ is correct, then $c^i_{-i}=s_{-i}$. In addition, if player $i$ is rational, then according to the maximization problem~(\ref{eq:maximization_problem}),  $s_{i}$ is a best response to $s_{-i}$ for each $i$, as desired.
\end{proof}

Next, we recall the concept of maximin strategy in zero-sum games and we underline its relation with rationality and correctness. Using the usual notation, a maximin strategy $s_i$ solves
\[
s_{i}\in \arg\max_{t_{i} \in \Delta A_{i}} \min_{t_j \in \Delta A_{j}}  u_{i}(t_{i}, t_j).
\]
For all strategies $t_i\in \Delta A_{i}$ of player $i$, player $j$ chooses $t_j \in \Delta A_{j}$ that minimizes player $i$'s utility (or, equivalently, maximizes player $j$'s own utility).  Note that this can be equivalently written as follows. 

\begin{definition}
\label{def:maximin}
Let $G$ be a zero-sum game. Player $i$'s mixed strategy $s_i$ is said to be a \textit{maximin strategy} if $s_i$ solves the following maximization problem:
\begin{equation}
\label{eq:maximin}
\begin{aligned}
\max_{t_{i} \in \Delta A_{i}} \quad & u_i(t_i, t_j) \\
\text{s.t.} \quad & t_{i} = c^j_{i},\\
& t_j\in \arg\min_{t'_{j} \in \Delta A_{j}}  u_{i}(c^j_{i}, t'_{j}) .
\end{aligned}
\end{equation}
\end{definition}

In words, a strategy $s_i$ is called a maximin strategy if player $i$ maximizes her utility assuming that player $j\neq i$ is correct (the first constraint) and player $j$ chooses a strategy that minimizes $i$'s utility  (the second constraint).  Our definition is designed to clearly emphasize two key assumptions: correctness and best-response behavior. This sets the stage for our next definition of rationality, which assumes that other players are not only correct but also maximize utility.

\begin{definition}
\label{def:rationality_assumption}
Player $i$ is said to be \textit{rational assuming that each player $j\neq i$ is both rational and correct}, if $i$ chooses a strategy $s_i$ that solves the following maximization problem:
\begin{equation}
\label{eq:maximization_problem2}
\begin{aligned}
\max_{t_{i} \in \Delta A_{i}} \quad & u_i(t_i, t_{-i}) \\
\text{s.t.} \quad & t_{-j} =c^j_{-j} \text{ for all } j\neq i,\\
& t_j\in \arg\max_{t'_{j} \in \Delta A_{j}}  u_{j}(t'_{j},c^j_{-j}) \text{ for all } j\neq i.
\end{aligned}
\end{equation}
\end{definition}

Analogous to the maximization problem in the zero-sum case,  player $i$ assumes that: (i) each player $j$ is correct, meaning that, $t_{-i}=c^j_{-j}$; and (ii) each player $j$ maximizes utility, assuming $c^j_{-j}$. These two assumptions, namely that each player $j\neq i$ is both correct and rational, and Remark~\ref{rem:Nash} lead to the following observation.

\begin{remark}
    \label{rem:Nash-others}
 Definition \ref{def:rationality_assumption} implies that in the $(n-1)$-person game induced by player $i$'s own choice $t_i$, the remaining players' strategies form a Nash equilibrium.
\end{remark}

We conclude this section with the following definition.

\begin{definition}
\label{def:mar_mac}
\textit{Mutual assumption of rationality and correctness} (\textbf{MARC}) is said to hold in an $n$-person game $G$, if the following statements are satisfied:
\begin{enumerate}
    \item Each player $i\in N$ is both rational (Definition~\ref{def:rationality}) and correct (Definition~\ref{def:correctness}). 
    \item Each player $i\in N$ is rational assuming that each player $j\neq i$ is both rational and correct (Definition~\ref{def:rationality_assumption}). 
\end{enumerate}
\end{definition}

\section{Main results}
\label{sec:results}

We first explore whether there is any important subclass of games in which MARC is satisfied. Our first main theorem establishes that MARC does not lead to any logical contradictions in zero-sum games.

\begin{theorem}[Consistency in zero-sum games]
\label{thm:zerosum}
MARC holds in every finite two-player zero-sum game.
\end{theorem}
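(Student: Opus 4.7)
The plan is to exhibit an explicit actual strategy profile together with conjectures that witness MARC, by exploiting two classical facts: (i) the minimax theorem guarantees the existence of maximin strategies whose profile is a Nash equilibrium, and (ii) in a two-player zero-sum game the reasoning in Definition~\ref{def:rationality_assumption} reduces exactly to the maximin problem of Definition~\ref{def:maximin}.

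First, pick a maximin strategy $s_i^*$ for each player $i \in \{1,2\}$; by the minimax theorem the profile $(s_1^*, s_2^*)$ is a Nash equilibrium. Set the actual strategies $s_i := s_i^*$ and set each player's conjecture to match the other's actual strategy: $c^i_j := s_j^*$ for $j \neq i$. This is the candidate witness, and the remainder of the argument just checks the two clauses of Definition~\ref{def:mar_mac}.

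Second, I would verify condition~1. Correctness is immediate from the construction since $c^i_j = s_j^* = s_j$. Rationality holds because $(s_1^*, s_2^*)$ being a Nash equilibrium means $s_i^*$ is a best response to $s_j^* = c^i_{-i}$, so $s_i$ solves the maximization problem in Definition~\ref{def:rationality}. Third, for condition~2, use $u_j = -u_i$ to rewrite the second constraint of Definition~\ref{def:rationality_assumption}, $t_j \in \arg\max_{t'_j} u_j(t'_j, c^j_i)$, as $t_j \in \arg\min_{t'_j} u_i(c^j_i, t'_j)$. Combined with the first constraint $t_i = c^j_i$, the maximization problem faced by player $i$ becomes identical to the maximin problem of Definition~\ref{def:maximin}, which has $s_i^*$ as a solution; hence $s_i$ satisfies condition~2.

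The only delicate point, and the step I would be careful with, is parsing Definition~\ref{def:rationality_assumption}: although $c^j_i$ formally denotes $j$'s belief about $i$, it is pinned down once $t_i$ is chosen via the correctness constraint $t_i = c^j_i$, so from player $i$'s vantage point one is really optimizing over $t_i$ alone with $c^j_i$ slaved to $t_i$. Once this is unpacked, the reduction to maximin is mechanical, and the argument collapses to the textbook equivalence between maximin play and Nash equilibrium in finite two-player zero-sum games.
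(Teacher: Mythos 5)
Your proposal is correct and follows essentially the same route as the paper: both arguments hinge on using $u_j=-u_i$ to reduce the optimization in Definition~\ref{def:rationality_assumption} to the maximin problem of Definition~\ref{def:maximin}, and then invoke the minimax theorem's guarantee that a maximin profile is a Nash equilibrium to satisfy condition~1 of Definition~\ref{def:mar_mac}. Your version is slightly more explicit in constructing the witnessing profile and conjectures and checking both clauses, but the underlying ideas are identical.
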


\begin{proof}
By Definition~\ref{def:rationality_assumption}, if player $i$ is rational assuming that player $j\neq i$ is both rational and correct, then $i$ chooses a strategy $s_i$ that solves the following maximization problem:

\begin{equation*}
\begin{aligned}
\max_{t_{i} \in \Delta A_{i}} \quad & u_i(t_i, t_j) \\
\text{s.t.} \quad & t_{i} = c^j_{i},\\
& t_j\in \arg\max_{t'_{j} \in \Delta A_{j}}  u_{j}(c^j_{i}, t'_{j}) .
\end{aligned}
\end{equation*}

Since the game is of zero-sum, the maximization problem reduces to the following problem: 

\begin{equation*}
\begin{aligned}
\max_{t_{i} \in \Delta A_{i}} \quad & u_i(t_i, t_j) \\
\text{s.t.} \quad & t_{i} = c^j_{i},\\
& t_j\in \arg\min_{t'_{j} \in \Delta A_{j}}  u_{i}(c^j_{i}, t'_{j}) .
\end{aligned}
\end{equation*}
Thus, by Definition~\ref{def:maximin}, for every player $i$, $s_i$ is a maximin strategy.

We have established  that a strategy profile $s$ becomes the maximin strategy profile under two conditions: each player $i$ is rational assuming that the other player $j \neq i$ is both rational and correct. Additionally, assume that each player $i$ is both rational and correct. Then, by Remark~\ref{rem:Nash}, $s$ must be a Nash equilibrium, which is indeed the case because $s$ is a maximin strategy profile in a (finite) zero-sum game. As a result, MARC holds.\end{proof}

As mentioned in the Introduction, the underlying intuition for this theorem can be attributed to two primary implications arising from von Neumann's (\citeyear{neumann1928}) minimax theorem. The first implication is that a maximin strategy inherently accounts for the scenario where player $j \neq i$ correctly anticipates the strategy of player $i$ and responds optimally. The second implication is that each player's maximin strategy is a best response to the maximin strategy used by the other player.  Thus, MARC holds in zero-sum games.

The next theorem establishes that the previous result does not extend to $n$-person games.

\begin{theorem}[Inconsistency]
\label{thm:main}
For every $n\geq 2$, there exists an $n$-player game such that MARC does not hold.
\end{theorem}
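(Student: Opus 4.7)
My strategy is to produce an explicit counterexample for $n = 2$ and lift it to all $n \geq 3$ via trivial dummy players.

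For $n = 2$, I would use the standard Battle of the Sexes game
\[
\begin{array}{c|cc}
 & L & R \\ \hline
T & (2,1) & (0,0) \\
B & (0,0) & (1,2)
\end{array}
\]
with rows indexing player 1's actions and columns indexing player 2's. Suppose for contradiction that MARC holds with actual profile $s = (s_1, s_2)$. Condition 1 of MARC combined with Remark~\ref{rem:Nash} forces $s$ to be a Nash equilibrium. Condition 2 of MARC, in the two-player case, requires each $s_i$ to solve player $i$'s Stackelberg leader problem from Definition~\ref{def:rationality_assumption}: maximize $u_i(t_i, t_j)$ over $t_i \in \Delta A_i$ subject to $t_j$ being a best response to $t_i$. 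A direct case analysis on player 2's best-response correspondence as a function of $q = t_1(T)$ --- best response is $L$ for $q > 2/3$, $R$ for $q < 2/3$, and any mixture at $q = 2/3$ --- shows that the unique optimum for player 1 is $s_1 = T$ (yielding payoff $2$, attained together with $t_2 = L$). By a symmetric argument, the unique optimum for player 2 is $s_2 = R$. Hence MARC forces $s = (T, R)$, which yields payoffs $(0, 0)$ and is not a Nash equilibrium --- contradicting condition 1.

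For $n \geq 3$, I would augment this game by adjoining $n - 2$ dummy players, each having singleton action set $A_i = \{\star\}$ and identically zero payoff, with the stipulation that the payoffs of players $1$ and $2$ do not depend on the dummies. Every MARC requirement involving a dummy player is vacuously satisfied (only one action available, constant utility), and every MARC requirement on players $1$ and $2$ reduces verbatim to its counterpart in the $n = 2$ game, since the dummies' strategies are fixed and enter no relevant utility function. The same contradiction then rules out MARC in this $n$-player game.

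The main obstacle is the Stackelberg case analysis for $n = 2$: I must rule out the possibility of a strictly better leader payoff at a mixed $t_i$ or at the tie-breaking boundary $q = 2/3$. This is an elementary finite computation, but the non-convexity of the Stackelberg feasible region $\{(t_i, t_j) : t_j \in \arg\max_{t'_j} u_j(t'_j, t_i)\}$ means each sub-region of $q$ --- and, at the boundary, each choice of the follower's mixture --- has to be checked separately rather than handled by a single convex-optimization argument.
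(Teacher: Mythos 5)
Your proposal is correct and follows essentially the same route as the paper: the same $2\times 2$ coordination game, the same Stackelberg-leader reading of Definition~\ref{def:rationality_assumption} forcing $(s_1,s_2)=(x_1,x_2)$, and the same contradiction with condition 1 of MARC (the paper exhibits the rationality/correctness clash for player 1 directly rather than via Remark~\ref{rem:Nash}, but this is equivalent). The only difference is the lift to $n\geq 3$ --- you adjoin singleton-action dummy players where the paper gives players $3,\dots,n$ a strictly dominant strategy --- and both devices work for the same reason.
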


\begin{proof}
To reach a contradiction, assume that MARC holds. An $n$-player game will be constructed for every $n\geq 2$, wherein $s_i$ contradicts player $i$'s rationality and correctness. First, consider the $2 \times 2$ game presented in Figure~\ref{fig:theorem}.

\begin{figure}[h!]
\centering
\[
	\begin{array}{ r|c|c| }
	\multicolumn{1}{r}{}
	&  \multicolumn{1}{c}{x_1}
	& \multicolumn{1}{c}{x_2}\\
	\cline{2-3}
	x_1 &  2,1 & 0,0 \\
	\cline{2-3}
	x_2 &  0,0 & 1,2 \\
	\cline{2-3}
	\end{array}
\]
\caption{A counterexample to MARC}
\label{fig:theorem}
\end{figure}

By Definition~\ref{def:rationality_assumption}, if player $i$ is rational, assuming that player $j\neq i$ is both rational and correct, then player $i$ chooses a strategy $s_i$ that solves:
\begin{equation*}
\begin{aligned}
\max_{t_{i} \in \Delta A_{i}} \quad & u_i(t_i, t_j) \\
\text{s.t.} \quad & t_{i} =c^j_{i}, \\
& t_j\in \arg\max_{t'_{j} \in \Delta A_{j}}  u_{j}(t'_{j},c^j_{i}).
\end{aligned}
\end{equation*}

For each player $i$, the pure strategy $x_i$ uniquely solves this maximization problem. In fact, if $s_i=x_i$, by correctness of player $j$ we obtain that $c^j_i=x_i$ and hence $t_j=x_i$. Similarly for $s_i=x_j$ it follows that $t_j=x_j$. Therefore, for each player $i$, $x_i$ solves the maximization problem of player $i$ and, as a result, we obtain that $(s_1, s_2)=(x_1,x_2)$. However, it is impossible for players to be both rational and correct simultaneously. For instance, player 1 may be rational if her conjecture about player 2 were $c^1_2=x_1$, yet this would render her incorrect as $s_2=x_2$. Conversely, player 1 may be correct if her conjecture is $c^1_2=x_2$, but then she would not be rational, assuming this conjecture. Thus, we obtain the desired contradiction to the assumption that MARC holds in this game.

We next generalize this game to the $n$-player case with $n>2$. Assume that for every player $j\in N$, $A_j=\{x_1,x_2\}$ and that $x_1$ is the strictly dominant strategy for every player $j'\in N\setminus \{1,2\}$.\footnote{Specific values of the payoffs do not matter.} For player 1 and player 2, the payoffs are defined as follows: for each $i\in \{1,2\}$ and each pure strategy profile $t$, $u_i(t)=u_i(t_1,t_2)$, where $u_i(t_1,t_2)$ denotes player $i$'s payoff in Figure~\ref{fig:theorem}.

Similar to the two-player case, for each player $i\in \{1,2\}$, if player $i$ is rational, assuming that each player $j\neq i$ is both rational and correct, then the pure strategy $x_i$ uniquely solves the maximization problem as outlined in  Definition~\ref{def:rationality_assumption}. Consequently, we obtain that $(s_1, s_2, ..., s_n)=(x_1,x_2,x_1,x_1,..., x_1)$. Thus, analogous to the two-player case, players cannot be both rational and correct. 
\end{proof}

The distinction between Remark~\ref{rem:Nash} and Theorem~\ref{thm:main} can be summarized as follows. In Remark~\ref{rem:Nash}, each player's given conjecture  \emph{is} correct, and each player is rational given his or her conjecture. However, no player makes any assumption regarding the correctness of the other player. In contrast, in Theorem~\ref{thm:main}, each player assumes the other player is both rational and correct.

Note that the counterexample provided in Theorem~\ref{thm:main} is not an `isolated' game. Only in very specific games, does the MARC hold in $n$-person games. While it is clear that MARC would hold in games with a strictly dominant strategy equilibrium, it is less apparent whether it holds in dominance solvable games. We present a counterexample below, showing that MARC does not hold in this more general class of games.
\[
	\begin{array}{ r|c|c| }
	\multicolumn{1}{r}{}
	&  \multicolumn{1}{c}{x_2}
	& \multicolumn{1}{c}{y_2}\\
	\cline{2-3}
	x_1 &  1,1 & 3,2 \\
	\cline{2-3}
	y_1 &  2,4 & 4,3 \\
	\cline{2-3}
	\end{array}
	\]
By Definition~\ref{def:rationality_assumption}, if player 1 is rational assuming that player 2 is both rational and correct, then the pure strategy $s_1=x_1$ would uniquely solve the maximization problem (\ref{eq:maximization_problem2}). Analogously, if player 2 is rational assuming that player 1 is both rational and correct, then the pure strategy $s_1=x_2$ would be the unique solution for player 2. Clearly, player $1$ is not rational, assuming a conjecture of $x_2$.

\printbibliography

\end{document}